\def\MODE{1}
\tikzset{
very thick,>=latex,
block/.style={auto,draw,rectangle,minimum height=1.8em,minimum width=2em},
bline/.style={blue,very thick},
bdash/.style={blue,very thick,dashed},
blfill/.style={blue,fill=blue!20,very thick},
bfill/.style={fill=blue!20},
twopic/.style={scale=2},
threepic/.style={scale=1.3},
fourpic/.style={scale=1.6},
axes/.style={black,very thick,<->},
ticks/.style={black,very thick}
}
\def\@IEEElegacywarn#1#2{}
\newtheorem{thm}{Theorem}
\newtheorem{defn}[thm]{Definition}
\newtheorem{cor}[thm]{Corollary}
\renewenvironment{proof}{\noindent{\bf Proof.}}{ \hfill ~\qed}
\def\qed{\rule[0pt]{5pt}{5pt}\par\medskip}
\newcommand{\T}{\rule{0pt}{2.6ex}}
\newcommand{\hlinet}{\hline\T}
\newcommand{\bmat}[1]{\begin{bmatrix}#1\end{bmatrix}}
\newcommand{\eps}{\varepsilon}
\newcommand{\R}{\mathbb{R}} 
\newcommand{\Z}{\mathbb{Z}} 
\newcommand{\C}{\mathbb{C}} 
\newcommand{\Htwo}{\mathcal{H}_2}
\renewcommand{\L}{\mathcal{L}}
\newcommand{\LTI}{\L_\textup{TI}}
\renewcommand{\t}[1]{\ensuremath{\tilde{#1}}}
\newcommand{\eemph}[1]{\textbf{\textit{#1}}}
\renewcommand{\subset}{\subseteq}
\newcommand{\norm}[1]{\lVert{#1}\rVert}
\newcommand{\set}[2]{\ensuremath{\left.\left\{ #1 \,\,\right\vert\, #2 \right\}}}
\newcommand{\sett}[2]{\ensuremath{\left\{ #1 \,\left\vert\, #2 \right\}\right.}}
\def\hg{h}
\def\ng{N}
\def\mg{M}
\def\note#1{}
\begin{document}
\title{Convexity of Decentralized Controller Synthesis}

\if\MODE1
	\author{Laurent~Lessard~\and~Sanjay~Lall}
\else
	\author{Laurent~Lessard~and~Sanjay~Lall
	\thanks{L. Lessard completed this work while he was at the Department of Aeronautics and Astronautics at Stanford University, Stanford, CA 94305, USA. He is now with the Department of Mechanical Engineering at the University of California, Berkeley, CA 94720, USA. \texttt{lessard@berkeley.edu}}%
	\thanks{S. Lall is with the Department of Electrical Engineering and Aeronautics and Astronautics at Stanford University, Stanford, CA 94305, USA. \texttt{lall@stanford.edu}}}
\fi

\note{Submitted to IEEE Transactions on Automatic Control}

\ifCLASSOPTIONpeerreview
\markboth{IEEE Transactions on Automatic Control}{}
\else
\markboth{IEEE Transactions on Automatic Control}%
{Lessard \MakeLowercase{\textit{et al.}}: An Algebraic Characterization of Tractability for Decentralized Control}
\fi

\maketitle

\begin{abstract}
In decentralized control problems, a standard approach is to specify
the set of allowable decentralized controllers as a closed subspace of
linear operators. This then induces a corresponding set of Youla
parameters. Previous work has shown that quadratic invariance of the
controller set implies that the set of Youla parameters is convex. In
this paper, we prove the converse. We thereby show that the only
decentralized control problems for which the set of Youla parameters
is convex are those which are quadratically invariant. We further show
that under additional assumptions, quadratic invariance is necessary
and sufficient for the set of achievable closed-loop maps to be
convex. We give two versions of our results. The first applies to
bounded linear operators on a Banach space and the second applies to
(possibly unstable) causal LTI systems in discrete or continuous time.
\end{abstract}

\IEEEpeerreviewmaketitle

\section{Introduction}\label{sec:intro}

\IEEEPARstart{T}{his} paper
 considers the feedback control of linear systems subject to
 structural constraints on the controller. We are interested in
 characterizing when the set of achievable closed-loop maps is
 convex. Convexity is important because in many cases it makes the
 problem of synthesizing structurally constrained controllers that
 optimize some performance objective a tractable one.

Suppose $w\in \mathcal{W}$ is the exogenous disturbance,
$u\in\mathcal{U}$ is the control input, $z\in\mathcal{Z}$ is the
regulated output, and $y\in\mathcal{Y}$ is the measurement.  We assume
that the plant is a linear and continuous map
$P:(\mathcal{W,U})\to(\mathcal{Z,Y})$. The controller is a map
$K:\mathcal{Y}\to\mathcal{U}$ that is connected to the plant in
feedback. We partition the four blocks of $P$ as $P_{ij}$ and by
convention we let $G = P_{22}$. The closed-loop interconnection
mapping $w\mapsto z$ is depicted in Figure~\ref{fig:PK_diagram}.

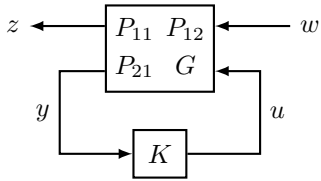
\begin{figure}[ht]
\centering
\begin{tikzpicture}[thick,node distance=1.4cm]
\node [block](P){$\begin{matrix}P_{11}\rule[-1.3ex]{-1.3ex}{0pt}& P_{12}\\
P_{21}\rule{-1.3ex}{0pt} & G\end{matrix}$};
\node [block,below of=P](K){$K$};
\draw [<-] (P.east)+(0,-0.3) -- +(0.6,-0.3) |- node[pos=0.25,anchor=west]{$u$} (K);
\draw [<-] (P.east)+(0,0.3) -- +(1,0.3) node [anchor=west]{$w$};
\draw [->] (P.west)+(0,-0.3) -- +(-0.6,-0.3) |- node[pos=0.25,anchor=east]{$y$} (K);
\draw [->] (P.west)+(0,0.3) -- +(-1,0.3) node [anchor=east]{$z$};
\end{tikzpicture}
\caption{\label{fig:PK_diagram}Plant $P$ connected to a controller $K$.}
\end{figure}

The set of \emph{achievable} closed-loop
maps is
\[
C = \set{ P_{11} + P_{12}K(I-GK)^{-1}P_{21} }{ K \in S }
\]
Here the set $S$ is the set of $K$ satisfying the structural constraints
imposed in the problem. Examples of such structural constraints
include sparsity requirements, where for example the controller
responsible for choosing~$u_j$ cannot measure~$y_i$. Other
possibilities include constraints arising as a consequence of
measurement delays.  To make notation more compact, we define the
function
\begin{equation}\label{eq:h}
\hg(K) = -K(I-GK)^{-1}
\end{equation}
and then we may write $C = P_{11}-P_{12}\hg(S)P_{21}$ where $\hg(S)$
is the image of $S$ under the map $\hg$.  Note that $\hg$ implicitly
depends on the map $G$.  In this paper we refer to $\hg$ and $G$
together with the understanding that $\hg$ is defined as a function of
$G$.  A constrained synthesis problem may take the form
\begin{equation}\label{opt:synth}
	\begin{aligned}
	\text{minimize}&\qquad \norm{X} \\
	\text{subject to}&\qquad X \in C
	\end{aligned}
\end{equation}
where $\norm\cdot$ is some convex measure of performance, such as a
norm. It is desirable to know when $C$ is a convex set, because then
if~\eqref{opt:synth} is a convex optimization problem.

A set $S$ is called \emph{quadratically invariant} under $G$ if
$KGK\in S$ for all $K\in S$.  It is shown
in~\cite{rotkowitz02,rotkowitz06} in that, roughly speaking,
$\hg(S)=S$ if and only if $S$ is quadratically invariant under
$G$. This provides a sufficient condition under which $C$ is convex.

This work builds on the notion of quadratic invariance developed
in~\cite{rotkowitz02,rotkowitz06}.  This earlier work showed that
quadratic invariance is sufficient for $\hg(S)$ and $C$ to be convex.
In this paper, our main contributions are as follows.

\begin{enumerate}[(i)]

\item We show that quadratic invariance is a \emph{necessary} condition for convexity of $\hg(S)$.

\item We show that, subject to some additional assumptions, quadratic invariance is also a necessary
condition for convexity of~$C$.

\item We give examples that illustrate both the generality of the results
 as well as the limitation encountered when the additional assumptions
 in (ii) are violated.

\end{enumerate}

The paper is organized as follows. We begin by reviewing related works
in the literature and covering necessary mathematical preliminaries,
including a review of quadratic invariance. In Section~\ref{sec:main}
we prove our main results and in Section~\ref{sec:examples} we show
some illustrative examples. Finally, we conclude in
Section~\ref{sec:conclusion}.

\subsection{Prior work}

Optimal controller synthesis subject to information constraints is
known to be hard in general~\cite{blondel}. Even when the plant is
linear, the noise is Gaussian, and the cost function is quadratic, the
optimal controller may not be linear~\cite{witsenhausen}. Furthermore,
there is currently no known efficient algorithm for finding the
optimal linear controller in general.

Despite these difficulties, the pervasiveness of decentralized
information in large-scale systems has driven researchers to seek
subclasses of problems that are tractable. Early work by
Radner~\cite{radner} showed that some static team decision problems
admit optimal controllers that are linear. This was extended to
dynamic teams by Ho and Chu~\cite{hochu} under the assumption that the
information structure is~\emph{partially nested}.

More recently, many works have addressed the broad
area of decentralized control synthesis, including~\cite{lin_2011,nayyar_2013,sabau_2011,tanaka_2011}.
Specific  efforts have focused on characterizing when the set of
achievable closed-loop maps as detailed in Section~\ref{sec:intro} is
convex. For example,~\cite{qimurti04,nonclassical} shows several
classes of information constraints that lead to a convex~$C$.
In~\cite{rotkowitz06} it is shown that the simple algebraic condition
of quadratic invariance encompasses a wide class of problems for
which~$C$ is convex.

However, quadratic invariance is only a sufficient condition for
convexity of~$C$. Recent
work~\cite{lessard2009reduction,lessard2010iqi} introduces the concept
of~\emph{internal quadratic invariance}, which gives a more complete
characterization of~$C$. This condition highlights cases where
quadratic invariance does not hold, but a suitable transformation can
produce a new problem for which~$C$ is unchanged and quadratic
invariance now holds.

One may also characterize~$C$ more directly. Using tools from
algebraic geometry, specifically elimination theory, 
the paper~\cite{shin2012decentralized} gives a method for
computing a representation of the smallest algebraic variety
containing~$C$. This does not necessarily
make the problem of verifying convexity easier, but it does open the
door to other tools for identifying convexity such as sum-of-squares
relaxations. See~\cite{sos-convexity} and references therein.

This paper gives conditions under which quadratic invariance is both
necessary and sufficient for convexity. Preliminary versions of some
of these results appeared
in~\cite{lessard_thesis,lessard2011qi_necc_suff} for the Banach space
case only. In this paper, we include new results that cover the
extended spaces~$\ell_{2e}$ and~$L_{2e}$, and we present new examples
in Section~\ref{sec:examples} that illustrate both the generality and
limitations of our results.

\section{Preliminaries}\label{sec:prelim}

If $\mathcal{X}$ and $\mathcal{Y}$ are topological vector spaces
(TVS), we let $\L(\mathcal{X},\mathcal{Y})$ denote the set of all maps
$T:\mathcal{X}\to\mathcal{Y}$ such that $T$ is linear and
continuous. This paper concerns properties of linear operators. What
we are able to prove depends on the underlying structure of the vector
spaces involved, so we distinguish two classes of topological vector
spaces: Banach spaces, and the extended spaces $\ell_{2e}$ and
$L_{2e}$.

\subsubsection*{Banach spaces}
A Banach space is a TVS whose topology is induced by a norm, and the
space is complete. Any linear and continuous map from one Banach space
to another is bounded. Common examples of Banach spaces include
$\ell_p$ and $L_p$; the set of functions $f:\Z_+\to \R$ and
Lebesgue-measurable functions $f:\R_+\to\R$ respectively, for which
the $p$-norm is finite. Simpler examples include the Hilbert spaces
$\R^n$, $\ell_2$, or $L_2$. Suppose $\mathcal{U}$ and $\mathcal{Y}$
are Banach spaces, and $G \in \L(\mathcal{U},\mathcal{Y})$. Define the
following set.
\begin{equation}\label{def:M}
\mg = \set{ K \in \L(\mathcal{Y},\mathcal{U}) }{ I-GK \text{ is invertible} }
\end{equation}
Note that the set $\mg$ is precisely the domain of $\hg$. For any
$A \in \L(\mathcal{Y},\mathcal{Y})$, the \eemph{resolvent set} is
defined as $
\rho(A) = \set{ \lambda \in \C }{ (\lambda I - A) \text{ is invertible}} 
$.  It is a fact that $\rho(A)$ is always an open set, and contains
all $\lambda \in \C$ for which $|\lambda| > \norm{A}$. Therefore, one
may define $\rho_\text{uc}(A)$, the unbounded connected component of
$\rho(A)$. Now define the subset $\ng\subset \mg$ as follows.
\begin{equation}\label{def:N}
\ng = \set{K \in \L(\mathcal{Y},\mathcal{U}) }{ 1 \in \rho_\text{uc}(GK) }
\end{equation}

\subsubsection*{Extended spaces}

Banach spaces are well-suited for representing a wide variety of
systems, but when the time horizon is infinite, such as with $\ell_2$
and $L_2$, only bounded maps are permitted. In order to represent
unbounded maps as well, such as unstable systems, we use the notion
of \emph{extended spaces}. First, define the truncation and shift
operators $P_T$ and $D_T$ which operate on functions $f:\Z_+\to\R$
or $f:\R_+\to\R$, as follows. For $T >0$, define
\begin{align*}
P_T f = \begin{cases}
	f(t)	& \text{if } t \le T \\
	0		& \text{otherwise}
\end{cases}
\end{align*}
and
\begin{align*}
D_T f = \begin{cases}
	0			& \text{if } t < T \\
	f(t-T)	& \text{otherwise}
\end{cases}
\end{align*}
The extended spaces $\ell_{pe}$ and $L_{pe}$ are defined as follows.
\begin{align*}
\ell_{pe} &= \sett{f:\Z_+\to\R}{P_Tf\in\ell_p\text{ for all }T\in\Z_+} \\
L_{pe} &= \sett{f:\R_+\to\R}{P_Tf\in L_p\text{ for all }T\in\R_+}
\end{align*}
Note that $\ell_{pe}$ is the same for every $p$, and is the set of
real sequences, which we abbreviate as $\ell_e$. However, $L_{pe}$ is
the set of functions that are $L_p$ on finite intervals, and so is
different for different~$p$. 

A map $A\in\L(L_{2e}^m,L_{2e}^n)$ or $A\in\L(\ell_{2e}^m,\ell_{2e}^n)$ is called causal if,
for all $ T \geq 0 $, we have $P_T A P_T = P_T A $.
A causal system $A$  is called time-invariant if  $D_T A = A D_T $ for all $T >0$.
We denote the set of linear, causal, time-invariant maps  as $\LTI$.

The extended spaces are not Banach spaces, but we may endow them with
a suitable topology and recover notions of convergence and
continuity. Let the topology on $L_{2e}$ be generated by all the open
$\norm{\cdot}_T$-balls for every $T\in\R_+$, where we define
$\norm{f}_T = \norm{P_T f}_{L_2}$. Also, we let the topology on
$\L(L_{2e}^m,L_{2e}^n)$ be generated by all the open
$\norm{\cdot}_T$-balls for every $T\in\R_+$, where $\norm{A}_T
= \norm{P_TA}_{L_2^m\to L_2^n}$. It can be shown that these topologies
are Hausdorff, and thus $L_{2e}$ and $\L(L_{2e}^m,L_{2e}^n)$ are
topological vector spaces (TVS). Furthermore, convergence in this
topology is equivalent to convergence in every $\norm{\cdot}_T$-norm,
and continuity of a linear operator in this topology is equivalent to
continuity in every $\norm{\cdot}_T$-norm. A similar topology is
defined for $\ell_{2e}$. For a thorough treatment of these concepts,
see for example~\cite{zimmer}. We also require the concept of
an \emph{inert} subspace, which we give below.


\begin{defn}\label{def:inert}
The set $S \subset \LTI(L_{2e}^{n_u},L_{2e}^{n_y})$ is \eemph{inert}
with respect to $G \in \LTI(L_{2e}^{n_y},L_{2e}^{n_u})$ if for all
$K\in S$, $(gk)_{ij} \in L_{\infty e}$ for all $i,j=1,\dots,n_y$ where
$(gk)$ is the impulse response matrix of $GK$. We overload our
notation and also define
$S\subset \LTI(\ell_{2e}^{n_u},\ell_{2e}^{n_y})$ to be inert with
respect to $G \in \LTI(\ell_{2e}^{n_u},\ell_{2e}^{n_y})$ if for all
$K\in S$, $(gk)_{ij}\in\ell_e$ for all $i,j=1,\dots,n_y$ and
$r((gk)(0)) < 1$ where $(gk)$ is the discrete impulse response matrix
of $GK$ and $r(\cdot)$ denotes the spectral radius.
\end{defn}

Note in particular that if $S$ is inert with respect to $G$, then
$I-GK$ is invertible for all $K\in S$. A proof of this result is
in~\cite{rotkowitz06}.

\subsubsection*{Quadratic invariance} 

We now summarize the definitions and main results regarding quadratic
invariance. The following definition may be found
in~\cite{rotkowitz02,rotkowitz06}.

\begin{defn}\label{def:qi}
Suppose $\mathcal{U}$ and $\mathcal{Y}$ are TVS. Suppose $S \subset \L(\mathcal{Y},\mathcal{U})$ and $G\in\L(\mathcal{U},\mathcal{Y})$.
The set $S$ is \eemph{quadratically invariant} under $G$ if $KGK \in S$ for all $K \in S$.
\end{defn}

\noindent The first quadratic invariance result we present applies to Banach spaces.

\begin{thm}[from \cite{rotkowitz02}] \label{thm:qi_banach} Suppose $\mathcal{U}$ and $\mathcal{Y}$ are Banach spaces. Suppose $S \subset \L(\mathcal{Y},\mathcal{U})$ is a closed subspace and $G \in \L(\mathcal{U},\mathcal{Y})$. Define $\mg$ and $\ng$ as in~\eqref{def:M}--\eqref{def:N}, and suppose $S \cap \ng = S \cap \mg$. Then
$S$ is quadratically invariant under $G$ if and only if
$ \hg(S\cap \mg) = S \cap \mg$.
\end{thm}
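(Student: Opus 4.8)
The plan is to prove the two implications separately, exploiting two structural facts about $\hg$: that it is an involution on $\mg$, and that it admits a Neumann-series expansion whose terms are controlled by quadratic invariance. First I would record the identity $\hg(\hg(K))=K$ for every $K\in\mg$, which follows from the computation $I-G\hg(K)=(I-GK)^{-1}$; this exhibits $\hg$ as a bijection of $\mg$ onto itself. Consequently, to prove the forward implication it suffices to establish the inclusion $\hg(S\cap\mg)\subset S$: applying $\hg$ once more and using that it is an involution then upgrades this inclusion to the claimed equality $\hg(S\cap\mg)=S\cap\mg$.

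For the forward implication (quadratic invariance $\Rightarrow$ equality) I would proceed in three steps. (a) A combinatorial lemma: if $S$ is a quadratically invariant subspace, then $a_n:=K(GK)^n\in S$ for every $K\in S$ and every $n\ge 0$. This follows by induction from the polarized form of quadratic invariance, $PGQ+QGP\in S$ for $P,Q\in S$, together with the multiplication rule $a_mGa_n=a_{m+n+1}$: taking $P=a_0$ and $Q=a_{N-1}$ gives $2a_N\in S$. (b) On the small-norm region $\set{K\in S}{\norm{GK}<1}$ the Neumann series yields $\hg(K)=-\sum_{n\ge0}a_n$ with partial sums in $S$; since $S$ is closed, $\hg(K)\in S$ there. (c) The extension to all of $S\cap\ng$ is the crux. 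Fixing $K\in S\cap\ng$, I would study the operator-valued map $F(\lambda)=-K(\lambda I-GK)^{-1}$, which is analytic on $\rho(GK)$ and satisfies $F(1)=\hg(K)$. By step (b) and the Neumann expansion $F(\lambda)=-\sum_{n\ge0}\lambda^{-n-1}a_n$, the map $F$ takes values in the closed subspace $S$ on the region $\abs{\lambda}>\norm{GK}$. For each continuous functional $\phi$ annihilating $S$, the scalar function $\phi\circ F$ is analytic on the connected set $\rho_\text{uc}(GK)$ and vanishes on an open subset, hence vanishes identically by the identity theorem; by the Hahn--Banach characterization of the closed subspace $S$, it follows that $F$ is $S$-valued on all of $\rho_\text{uc}(GK)$. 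Since $1\in\rho_\text{uc}(GK)$ by the definition of $\ng$, this gives $\hg(K)=F(1)\in S$. The hypothesis $S\cap\ng=S\cap\mg$ then promotes this to $\hg(S\cap\mg)\subset S$, completing the forward direction.

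For the converse (equality $\Rightarrow$ quadratic invariance) I would use a scaling argument. Fix $K\in S$. For small real $t$ we have $tK\in S\cap\mg$, so $\hg(tK)\in S$ by hypothesis. Expanding, $\hg(tK)=-\sum_{n\ge0}t^{n+1}a_n$, so $g(t):=\hg(tK)+tK=-t^2\sum_{n\ge0}t^{n}a_{n+1}\in S$. The norm limit of the scalar multiple $g(t)/t^2$ as $t\to 0$ is $-a_1=-KGK$; since each $g(t)/t^2\in S$ and $S$ is closed, $KGK\in S$. As $K\in S$ was arbitrary, $S$ is quadratically invariant. This direction does not require the hypothesis $S\cap\ng=S\cap\mg$.

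I expect the main obstacle to be step (c) of the forward direction: propagating membership in the closed subspace $S$ from the small-norm region out to $\lambda=1$. This is exactly where the definition of $\ng$ through the unbounded connected component $\rho_\text{uc}(GK)$ is indispensable, since it guarantees that the seed region $\abs{\lambda}>\norm{GK}$ and the target point $\lambda=1$ lie in a common connected domain of analyticity, so that the scalar identity theorem combined with the Hahn--Banach description of $S$ can close the argument.
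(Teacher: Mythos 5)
The paper states this theorem without proof, citing~\cite{rotkowitz02}, and your argument is a correct reconstruction of precisely the approach used there: the involutive property of $\hg$, the inductive closure $K(GK)^n\in S$ from quadratic invariance of a subspace, the Neumann series on the small-norm region, and the extension to $S\cap\ng$ via the Hahn--Banach separation of the closed subspace $S$ combined with the identity theorem on the connected set $\rho_\text{uc}(GK)$, with the scaling/limit argument for the converse. No gaps; this matches the cited proof in both structure and detail.
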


\noindent 
The second quadratic invariance result applies to the topological
vector spaces~$\ell_{2e}$ and~$L_{2e}$.

\begin{thm}[from \cite{rotkowitz06}] \label{thm:qi_extended} Suppose $S\subset\LTI(L_{2e}^{n_u},L_{2e}^{n_y})$ is an inert closed subspace and $G \in \LTI(L_{2e}^{n_u},L_{2e}^{n_y})$, or suppose $S\subset\LTI(\ell_{2e}^{n_u},\ell_{2e}^{n_y})$ is an inert closed subspace and $G \in \LTI(\ell_{2e}^{n_u},\ell_{2e}^{n_y})$. Then
$S$ is quadratically invariant under $G$ if and only if $\hg(S) = S$.
\end{thm}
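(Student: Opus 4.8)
The plan is to reduce the set-level statement $\hg(S)=S$ to two purely algebraic facts about the map $\hg$ and then port the argument from the norm topology of the Banach case (Theorem~\ref{thm:qi_banach}) to the truncated-seminorm topology of the extended spaces, using inertness to supply the missing convergence. The two facts I would establish first are the Neumann expansion
\[
\hg(K) = -K(I-GK)^{-1} = -\sum_{n=0}^\infty (KG)^n K
\]
and the involution identity $\hg(\hg(K)) = K$, which drops out of the computation $I-G\hg(K) = (I-GK)^{-1}$. Invertibility of $I-GK$ for $K\in S$ is already guaranteed by inertness (the remark after Definition~\ref{def:inert}); what inertness buys in addition, and what I expect to be the crux of the whole proof, is convergence of the series in each seminorm $\norm{\cdot}_T$. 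In continuous time the truncation $P_TGKP_T$ is a Volterra operator with locally bounded kernel, so the norms of its powers decay factorially; in discrete time the feedthrough condition $r((gk)(0))<1$ in Definition~\ref{def:inert} is exactly what controls the diagonal block-Toeplitz part and forces the series to converge at each fixed time.

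For the direction \emph{quadratically invariant $\Rightarrow \hg(S)=S$}, I would first polarize: on a subspace, $KGK\in S$ for all $K$ is equivalent to $K_1GK_2 + K_2GK_1 \in S$ for all $K_1,K_2\in S$. Writing $W_n = (KG)^nK$ and using $W_aGW_b + W_bGW_a = 2W_{a+b+1}$, an induction starting from $W_0=K\in S$ then shows $W_n\in S$ for every $n$. The partial sums of $-\sum_n W_n$ therefore lie in the closed subspace $S$ and, by the convergence established above, so does their limit $\hg(K)$; hence $\hg(S)\subset S$. The reverse inclusion is where the involution pays off: for $K\in S$ the element $\hg(K)$ lies in $S$, is again inert, and satisfies $\hg(\hg(K))=K$, so $K\in\hg(S)$. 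Thus $\hg(S)=S$.

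For the converse, suppose $\hg(S)=S$. Since $S$ is a subspace, $\alpha K\in S$ for every scalar $\alpha$, and so $\hg(\alpha K) = -\sum_n \alpha^{n+1}W_n \in S$. Subtracting the linear term and rescaling,
\[
\frac{\hg(\alpha K) + \alpha K}{\alpha^2} = -KGK - \alpha(KG)^2K - \cdots \in S,
\]
which stays in $S$ because $S$ is a subspace. Letting $\alpha\to 0$ and invoking closedness of $S$ gives $-KGK\in S$, that is $KGK\in S$, so $S$ is quadratically invariant. The one point needing care here is the same topological one as before: justifying the termwise convergence of the series and the passage to the limit $\alpha\to0$ in every seminorm $\norm{\cdot}_T$, again controlled by the inert hypothesis through uniform bounds on $\norm{W_n}_T$.
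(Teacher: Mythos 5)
This theorem is stated in the paper without proof---it is quoted directly from \cite{rotkowitz06}---and your proposal correctly reconstructs the argument of that reference: inertness gives convergence of the Neumann series for $(I-GK)^{-1}$ in every truncation norm (quasi-nilpotence of the truncated Volterra operator in continuous time, $r((gk)(0))<1$ controlling the block-Toeplitz diagonal in discrete time), polarization plus induction places each term $(KG)^nK$ in the closed subspace $S$, the involution $\hg(\hg(K))=K$ upgrades $\hg(S)\subset S$ to equality, and the converse follows from the limit $\alpha\to 0$ of $\bigl(\hg(\alpha K)+\alpha K\bigr)/\alpha^2$ using closedness of $S$. This is essentially the same approach as the original proof, so there is nothing substantive to flag.
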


Roughly speaking, $S$ being quadratically invariant under $G$ means
that the set $S$ is invariant under the application of
$\hg$. Technical conditions arise only to ensure that we avoid $K$ for
which $\hg(K)$ is not well-defined. In the case of
Theorem~\ref{thm:qi_banach}, this amounts to intersecting $S$ with
$\mg$, the domain of $\hg$. In Theorem~\ref{thm:qi_extended}, we
make the technical assumption that $S$ is an inert subspace, and this
ensures that $\hg(K)$ is always well-defined.


\section{Main results}\label{sec:main}

We begin by defining two types of sets that will be useful in our main
results. These definitions apply to any vector space over $\R$, so in
particular they apply to both Banach spaces and the topological vector
spaces $\ell_{2e}$ and $L_{2e}$.

\begin{defn}
Suppose $\mathcal{V}$ is a vector space over~$\R$, and
$S \subset \mathcal{V}$. We call $S$ \eemph{homogeneous} if for all
$v \in S$ and $\alpha \in \R$, we have $\alpha v \in S$.
\end{defn}

\noindent
Homogeneous sets are collections of lines that pass through the
origin. If a point belongs to a homogeneous set, so does the line that
passes through that point and the origin. Note that every subspace is
homogeneous, but not all homogeneous sets are subspaces.

\begin{defn}
Suppose $\mathcal{V}$ is a vector space over~$\R$, and
$T \subset \mathcal{V}$. We call $T$ \eemph{star-shaped} if for all
$v \in T$ and $\alpha \in [0,1]$, we have $\alpha v \in T$.
\end{defn}

\noindent 
If a point belongs to a homogeneous set, so does the line segment
connecting it to the origin. Therefore, every homogeneous set is
star-shaped, but not vice-versa. Furthermore, every convex set containing
the origin is star-shaped, but not vice-versa. For examples of these
sets in $\R^2$, see Figure~\ref{fig:linestar_sets}.

\begin{figure}[ht]
\centering
\begin{minipage}[b]{.45\textwidth}
	\centering
	\begin{tikzpicture}[bline,twopic]
	\def\w{.3}  \def\x{.7}  \def\y{.8}  \def\z{.3}
	\draw[white,fill=blue!20] (-\x,-1) -- (\x,1) -- (1,\x) -- (-1,-\x);
	\draw[<->] (-\x,-1) -- (\x,1);
	\draw[<->] (-1,-\x) -- (1,\x);
	\draw[white,fill=blue!20] (-\y,1) -- (\y,-1) -- (1,-\z) -- (-1,\z);
	\draw[<->] (-\y,1) -- (\y,-1);
	\draw[<->] (1,-\z) -- (-1,\z);
	\draw[<->] (1,\w) -- (-1,-\w);
	\draw[axes] (-1.3,0) -- (1.3,0);
	\draw[axes] (0,-1.1) -- (0,1.1);
	\end{tikzpicture}
	\subcaption{homogeneous set}
\end{minipage}
\begin{minipage}[b]{.45\textwidth}
	\centering
	\begin{tikzpicture}[bline,twopic]
	\def\w{.5}  \def\x{.5}
	\draw[white,fill=blue!20] (0,0) -- (-\x,-1) -- (-1,-\x);
	\draw[->] (0,0) -- (-\x,-1);
	\draw[->] (0,0) -- (-1,-\x);
	\draw[->] (0,0) -- (\w,-1);
	\draw (0,0) -- (-.7,.4);
	\draw (0,0) -- (-.7,.9);
	\draw (0,0) -- (-.25,1);
	\draw[white,fill=blue!20] (0,0) -- (.3,.8) -- (.45,.35) -- (1,.2) -- (0,0);
	\draw[bline] (0,0) -- (.3,.8) -- (.45,.35) -- (1,.2) -- (0,0);
	\draw[axes] (-1.3,0) -- (1.3,0);
	\draw[axes] (0,-1.1) -- (0,1.1);
	\end{tikzpicture}
	\subcaption{star-shaped set}
\end{minipage}
\caption{\label{fig:linestar_sets}
Example of a homogeneous set and a star-shaped set in $\R^2$. Lines
with arrows indicate that they extend to infinity. }
\end{figure}
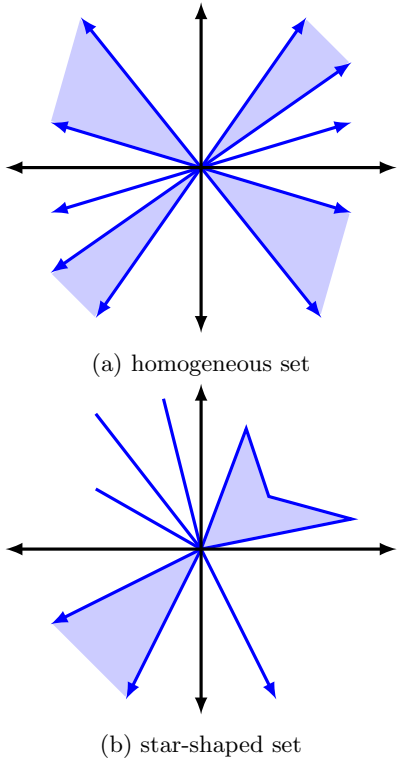

\subsection{Banach space results}\label{sec:main_banach}

Our first main result states that if $\hg$ maps a homogeneous subset
of its domain to a star-shaped set, it must in fact map that
homogeneous set to itself. We will see that this result has
implications concerning convexity.

\begin{thm}\label{thm:main_banach}
Suppose $\mathcal{U}$ and $\mathcal{Y}$ are Banach spaces. Suppose
$S \subset \L(\mathcal{Y},\mathcal{U})$ is closed and
homogeneous and $G \in \L(\mathcal{U},\mathcal{Y})$.
 Define $\mg$ and $\ng$ as
in~\eqref{def:M}--\eqref{def:N}, and suppose $S \cap \ng = S \cap
\mg$. If $\hg(S \cap \mg) = T \cap \mg$ for some star-shaped set $T$, then $T\cap \mg = S\cap \mg$.
\end{thm}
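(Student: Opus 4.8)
The plan is to exploit the fact that $\hg$ is an \emph{involution} on its domain $\mg$. The crucial algebraic observation, which I would verify first since everything downstream rests on it, is that for any $K\in\mg$, writing $Q=\hg(K)=-K(I-GK)^{-1}$, one computes $I-GQ=(I-GK)^{-1}$; hence $Q\in\mg$ and $\hg(Q)=-Q(I-GQ)^{-1}=K$. Thus $\hg$ is a bijection of $\mg$ onto itself with $\hg\circ\hg=\mathrm{id}$. Granting this, the hypothesis $\hg(S\cap\mg)=T\cap\mg$ immediately yields, upon applying $\hg$ to both sides and using $S\cap\mg\subset\mg$, the companion identity $\hg(T\cap\mg)=S\cap\mg$.

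With these two identities in hand, the theorem reduces to a \emph{single} inclusion, namely $T\cap\mg\subset S$. Once that is established, it gives $T\cap\mg\subset S\cap\mg$ (the intersection with $\mg$ being automatic), and then applying the set map $\hg$ to this inclusion, together with monotonicity of images and the two identities just recorded, produces $S\cap\mg=\hg(T\cap\mg)\subset\hg(S\cap\mg)=T\cap\mg$. The two inclusions combine to the desired equality $T\cap\mg=S\cap\mg$. The appeal of this route is that the reverse inclusion comes essentially for free from the involution, so I avoid running a second, separate limiting argument.

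To prove $T\cap\mg\subset S$, I would fix $Q\in T\cap\mg$ and scale toward the origin. For $s>0$ small enough that $s\norm{GQ}<1$ (so that $I-sGQ$ is invertible by a Neumann series argument) and $s\le 1$, star-shapedness of $T$ gives $sQ\in T$ while invertibility gives $sQ\in\mg$; hence $sQ\in T\cap\mg$ and therefore $\hg(sQ)\in\hg(T\cap\mg)=S\cap\mg\subset S$. Now the exact formula $\hg(sQ)=-sQ(I-sGQ)^{-1}$ gives $-s^{-1}\hg(sQ)=Q(I-sGQ)^{-1}$, which lies in $S$ because $S$ is homogeneous. Letting $s\to 0^+$, we have $(I-sGQ)^{-1}\to I$ in operator norm, so $Q(I-sGQ)^{-1}\to Q$; since $S$ is closed, $Q\in S$, as required. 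Intuitively this works because $\hg(sQ)\approx -sQ$ to first order, so rescaling recovers $Q$ itself in the limit.

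The hard part will be the analytic step of the last paragraph: I must ensure the scaled operators remain simultaneously in $T$ (needing $s\le 1$ and star-shapedness) and in $\mg$ (needing $s$ small), and that the limit is taken in the operator-norm topology so that closedness of $S$ genuinely applies. The structural engine, by contrast, is the involution identity. It is worth noting that this argument uses only small scalings and does not itself seem to require the standing hypothesis $S\cap\ng=S\cap\mg$; that assumption is instead what allows the conclusion $T\cap\mg=S\cap\mg$ to be converted into quadratic invariance of $S$ via Theorem~\ref{thm:qi_banach}.
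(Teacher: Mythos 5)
Your proof is correct and takes essentially the same route as the paper's: both arguments rest on the involutive property of $\hg$ on $\mg$, scale a point of $T\cap\mg$ toward the origin (using star-shapedness of $T$ and smallness of the scaling to remain in $\mg$), pull back through $\hg$ into $S\cap\mg$, rescale via homogeneity of $S$, and conclude with an operator-norm limit using closedness of $S$, recovering the reverse inclusion for free from the involution. The only cosmetic differences are that you fix $Q\in T\cap\mg$ directly while the paper fixes $K\in S\cap\mg$ and scales $\hg(K)$ (the same points, via the bijection), and you justify invertibility of $I-sGQ$ by a Neumann series where the paper invokes openness of the resolvent set of $GK$; your observation that the hypothesis $S\cap\ng=S\cap\mg$ is never used in this proof also matches the paper, where it is needed only to invoke Theorem~\ref{thm:qi_banach} in Corollary~\ref{cor:main_banach}.
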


\begin{proof}
Suppose $\hg(S\cap \mg) = T\cap \mg$, where $T$ is a star-shaped
set. Fix some $K\in S\cap \mg$. Therefore, $I-GK$ is invertible, and
$1\in \rho(GK)$. The resolvent set of a bounded linear operator is an
open set, so there exists a sufficiently small $\eps \in (0,1]$ such
that $1-\alpha \in \rho(GK)$ for all $\alpha \in [0,\eps]$. For any
such $\alpha$, it follows that $I-(1-\alpha)GK$ is
invertible. Therefore, $\left(I-(1-\alpha)GK\right)(I-GK)^{-1}$ is
invertible as well. Expanding this expression, we find that it equals
$I-\alpha G\hg(K)$. Thus $\alpha \hg(K) \in \mg$.

Also, we have $\hg(K) \in T$ by assumption, and so $\alpha \hg(K) \in
T$ whenever $\alpha\in[0,1]$, because $T$ is a star-shaped set. It
follows that for $\alpha~\in [0,\eps]$, $\alpha \hg(K) \in
T\cap \mg$. Applying $\hg$ to both sides, we conclude that
$\hg(\alpha \hg(K)) \in \hg(T\cap \mg) = S\cap \mg$, where we made use
of the involutive property of $\hg$. Expanding $\hg(\alpha \hg(K))$,
we find that it equals $\alpha K(I-(1-\alpha)GK)^{-1}$. Since $S$
is a homogeneous set, we may multiply this expression by $-1/\alpha$
when $\alpha\ne 0$, and the result will still lie in $S$. Thus,
$-K(I-(1-\alpha)GK)^{-1} \in S$.  \\Now define the function
$g:[0,\eps] \rightarrow \mathcal{L}(\mathcal{Y},\mathcal{U})$ by
\begin{equation}\label{def:g}
g(\alpha) = -K(I-(1-\alpha)GK)^{-1}
\end{equation}
Notice that $g$ is continuous at $0$, since $I-(1-\alpha)GK$ is
invertible for sufficiently small $\alpha \geq 0$ as above, and the inversion
map is continuous on its domain.  Since $S$ is closed 
and $g(\alpha)\in S$ for $\alpha \in (0,\eps]$, we
have
\[
\lim_{\alpha \rightarrow 0^+} g(\alpha) \in S
\]
We may take the limit $\alpha\to 0^+$ by simply evaluating $g$ at
$\alpha=0$. Thus, we conclude that $\hg(K) \in S$. Now $\hg$ is a
bijection from $\mg$ to $\mg$, and so we actually have $\hg(K) \in
S\cap \mg$. Since $K$ was an arbitrary element of $S\cap \mg$, it
follows that $\hg(S\cap \mg) \subset S\cap \mg$. Using the involutive
property of $\hg$ once more, $\hg(S\cap \mg) = S\cap \mg$, as
required.
\end{proof}

Theorem~\ref{thm:main_banach} may be specialized to the case where $S$
is a subspace, and combined with Theorem~\ref{thm:qi_banach} to yield
a necessary condition for convexity of $\hg$.

\begin{cor}\label{cor:main_banach}
Suppose $\mathcal{U}$ and $\mathcal{Y}$ are Banach spaces. Suppose
$S \subset \L(\mathcal{Y},\mathcal{U})$ is a closed subspace and
$G \in \L(\mathcal{U},\mathcal{Y})$. Define $\mg$ and $\ng$ as
in~\eqref{def:M}--\eqref{def:N}, and suppose $S \cap \ng =
S \cap \mg$. Then the following statements are equivalent.
\begin{enumerate}[(i)]
\item $S$ is quadratically invariant under $G$
\item $\hg(S\cap \mg)=S\cap \mg$
\item $\hg(S\cap \mg)= \Gamma \cap \mg$ for some convex set  $\Gamma$
\end{enumerate}
\end{cor}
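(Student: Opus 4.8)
The plan is to prove the three statements equivalent by combining the two theorems already established, via the cycle (i) $\Leftrightarrow$ (ii) $\Rightarrow$ (iii) $\Rightarrow$ (ii). The equivalence of (i) and (ii) requires no new work: since $S$ is a closed subspace and the hypothesis $S\cap\ng = S\cap\mg$ holds, this is exactly the content of Theorem~\ref{thm:qi_banach}. So the only substance lies in relating the convex-set condition (iii) to statement (ii).

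The implication (ii) $\Rightarrow$ (iii) is immediate. Since $S$ is a subspace, it is in particular a convex set, so taking $\Gamma = S$ gives $\hg(S\cap\mg) = S\cap\mg = \Gamma\cap\mg$, which is precisely the form required by (iii).

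For the converse (iii) $\Rightarrow$ (ii), the idea is to invoke Theorem~\ref{thm:main_banach} with the star-shaped set supplied by $\Gamma$. First I would observe that, being a subspace, $S$ is homogeneous, so the hypotheses of Theorem~\ref{thm:main_banach} on $S$ are met. The key verification is that the convex set $\Gamma$ is in fact star-shaped, which holds as soon as $0\in\Gamma$, since every convex set containing the origin is star-shaped. To see that $0\in\Gamma$, note that $0\in S$ because $S$ is a subspace, and $0\in\mg$ because $I - G\cdot 0 = I$ is invertible; hence $0\in S\cap\mg$. Since $\hg(0) = -0\,(I-G\cdot 0)^{-1} = 0$, we get $0 = \hg(0)\in\hg(S\cap\mg) = \Gamma\cap\mg$, so indeed $0\in\Gamma$. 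Therefore $\Gamma$ is star-shaped, and Theorem~\ref{thm:main_banach} applied with $T = \Gamma$ yields $\Gamma\cap\mg = S\cap\mg$. Combining this with the hypothesis gives $\hg(S\cap\mg) = \Gamma\cap\mg = S\cap\mg$, which is statement (ii).

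There is no serious obstacle here: the heavy lifting has already been done in Theorem~\ref{thm:main_banach}, whose role is to upgrade a star-shaped image to an exact fixed-point identity. The only genuinely new check in this corollary is that convexity together with membership of the origin forces star-shapedness, and the verification that $0$ lies in $\Gamma$ through the fixed point $\hg(0)=0$. Everything else is bookkeeping, namely recalling that subspaces are homogeneous and convex, and citing Theorem~\ref{thm:qi_banach} for the (i)~$\Leftrightarrow$~(ii) link.
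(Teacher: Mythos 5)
Your proposal is correct and follows essentially the same route as the paper: Theorem~\ref{thm:qi_banach} for (i)$\iff$(ii), the trivial choice $\Gamma = S$ for (ii)$\implies$(iii), and Theorem~\ref{thm:main_banach} with $T=\Gamma$ for (iii)$\implies$(ii), after noting that $\Gamma$ contains the origin and is therefore star-shaped. The only difference is that you spell out explicitly why $0\in\Gamma$ (via $\hg(0)=0$ and $0\in S\cap\mg$), a step the paper asserts without detail.
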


\begin{proof}
{\it(i)$\iff$(ii)} is precisely Theorem~\ref{thm:qi_banach}.  The case
{\it(ii)$\implies$(iii)} is immediate.  Finally, to show
{\it(ii)$\impliedby$(iii)}, notice that if \textit{(iii)} holds then
$\Gamma$ must contain the origin, and since it is convex it must
therefore be star-shaped. Then, since every subspace is homogeneous,
\textit{(ii)} follows from Theorem~\ref{thm:main_banach}.
\end{proof}

Corollary~\ref{cor:main_banach} shows that quadratic invariance is
necessary and sufficient for convexity of $\hg$. We now show that
under additional invertibility assumptions, this result also applies
to the set of achievable closed-loop maps described in
Section~\ref{sec:intro}.

\begin{cor}\label{cor:clmap_banach}
Suppose the conditions of Corollary~\ref{cor:main_banach}
hold. Additionally, suppose $\mathcal{W}$ and $\mathcal{Z}$ are Banach
spaces, and $P_{11}\in\L(\mathcal{W},\mathcal{Z})$,
$P_{12}\in\L(\mathcal{U},\mathcal{Z})$, and
$P_{21}\in\L(\mathcal{W},\mathcal{Y})$. Finally, suppose $S \cap \mg =
S$.  If $P_{12}$ is left-invertible and $P_{21}$ is right-invertible,
then the following statements are equivalent.
\begin{enumerate}[(i)]
\item $S$ is quadratically invariant under $G$
\item $P_{11}-P_{12}\hg(S)P_{21} = \Gamma$, where $\Gamma$ is a convex set.
\end{enumerate}
\end{cor}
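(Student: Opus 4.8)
The plan is to exploit the invertibility hypotheses to show that the affine parametrization $Y \mapsto P_{11}-P_{12}YP_{21}$ is injective on $\L(\mathcal{Y},\mathcal{U})$, so that convexity can be transferred in both directions between $\hg(S)$ and $C = P_{11}-P_{12}\hg(S)P_{21}$. I would then reduce the statement to Corollary~\ref{cor:main_banach}, which already relates convexity of $\hg(S)$ to quadratic invariance. Two bookkeeping facts make the reduction clean: since $S \cap \mg = S$ by hypothesis and $\hg$ is a bijection of $\mg$ onto $\mg$, we have both $\hg(S) = \hg(S\cap\mg)$ and $\hg(S)\subset\mg$.

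First I would set up the affine maps. Let $P_{12}^L$ and $P_{21}^R$ be bounded left and right inverses, so that $P_{12}^L P_{12} = I$ and $P_{21}P_{21}^R = I$, and define $\psi(Y) = P_{11}-P_{12}YP_{21}$ together with $\phi(X) = P_{12}^L(P_{11}-X)P_{21}^R$. A direct computation gives $\phi(\psi(Y)) = P_{12}^L P_{12} Y P_{21}P_{21}^R = Y$, so $\phi$ is an affine left inverse of $\psi$. Consequently $\psi$ restricts to a bijection of $\hg(S)$ onto $C$ with inverse $\phi$, and both maps are affine (and map operator spaces to operator spaces, as one checks by composing domains).

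For the forward direction, if $S$ is quadratically invariant then Theorem~\ref{thm:qi_banach} together with $S\cap\mg = S$ yields $\hg(S) = S$, which is convex since $S$ is a subspace; hence $C = \psi(\hg(S))$ is the affine image of a convex set and is convex, giving statement (ii) with $\Gamma = C$. For the converse, suppose $C$ is convex. Then $\hg(S) = \phi(C)$ is the affine image of a convex set, hence convex. Because $\hg(S)\subset\mg$, taking $\Gamma = \hg(S)$ gives $\hg(S\cap\mg) = \hg(S) = \Gamma\cap\mg$ with $\Gamma$ convex, so condition (iii) of Corollary~\ref{cor:main_banach} holds and quadratic invariance follows from the equivalence proved there.

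The main obstacle, such as it is, is recognizing that left-invertibility of $P_{12}$ and right-invertibility of $P_{21}$ are precisely what make the closed-loop parametrization injective, so that convexity of $C$ can be \emph{pulled back} to convexity of $\hg(S)$ rather than merely pushed forward; without an affine left inverse, convexity of $C$ would carry no information about $\hg(S)$, and indeed the examples in Section~\ref{sec:examples} are expected to show that the conclusion can fail when these invertibility assumptions are dropped. The remaining work is routine: confirming the operator domains compose correctly and verifying $\hg(S)\subset\mg$ so that Corollary~\ref{cor:main_banach} applies with the choice $\Gamma = \hg(S)$.
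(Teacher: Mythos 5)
Your proposal is correct and follows essentially the same route as the paper: the forward direction pushes convexity of $\hg(S)=S$ through the affine map $Y\mapsto P_{11}-P_{12}YP_{21}$, and the converse pulls convexity back via the left/right inverses $P_{12}^\dagger$, $P_{21}^\dagger$ and then invokes Corollary~\ref{cor:main_banach}, exactly as in the paper. Your extra bookkeeping (verifying $\hg(S)\subset\mg$ so that $\Gamma=\hg(S)$ satisfies condition \textit{(iii)} of Corollary~\ref{cor:main_banach}) is a detail the paper leaves implicit, but it is the same argument.
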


\begin{proof}
The proof of {\it(i)$\implies$(ii)} follows directly from
Theorem~\ref{thm:main_banach}. Conversely, suppose that {\it(ii)}
holds. Let $P_{12}^\dagger$ be a left-inverse of $P_{12}$ and let
$P_{21}^\dagger$ be a right-inverse of $P_{21}$. Then we have
\begin{align*}
\hg(S) &= P_{12}^\dagger P_{11} P_{21}^\dagger - P_{12}^\dagger \Gamma P_{21}^\dagger
\end{align*}
where the right-hand side is a convex set, because it is an affine
transformation of the convex set $\Gamma$. Applying
Corollary~\ref{cor:main_banach}, we conclude that $S$ is quadratically
invariant under $G$, as required.
\end{proof}

In Section~\ref{sec:examples}, we will show some examples that
illustrate why the invertibility requirements are necessary in this
result.



\subsection{Extended space results}\label{sec:main_extended}

We now present results analogous to those of
Section~\ref{sec:main_banach}, but now for the extended
spaces~$\ell_{2e}$ and~$L_{2e}$.

\begin{thm}\label{thm:main_extended}
Suppose $S \subset \LTI(L_{2e}^{n_y},L_{2e}^{n_u})$ is inert, closed,
and homogeneous and $G \in \LTI(L_{2e}^{n_y},L_{2e}^{n_u})$. If
$\hg(S)=T$ where $T$ is a star-shaped set, then $T = S$. This result
also holds when $L_{2e}$ is replaced by $\ell_{2e}$.
\end{thm}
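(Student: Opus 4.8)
The plan is to reproduce the structure of the proof of Theorem~\ref{thm:main_banach}, replacing the two places where that argument invoked Banach-space structure. There, openness of the resolvent set supplied invertibility of $I-(1-\alpha)GK$ for small $\alpha$, and norm-continuity of operator inversion supplied the limiting step. In the extended setting I will instead obtain invertibility directly from the inert and homogeneous hypotheses, and I will replace norm continuity by continuity in each of the seminorms $\norm{\cdot}_T$ that generate the topology on the relevant operator space. The purely algebraic identities involving $\hg$ carry over verbatim, since they never used the underlying topology.

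First I would fix $K\in S$. Because $S$ is inert, $I-GK$ is invertible, so $\hg(K)$ is well defined and $\hg(K)\in\hg(S)=T$. For $\alpha\in[0,1]$, homogeneity of $S$ gives $(1-\alpha)K\in S$, and inertness applied to this element yields invertibility of $I-(1-\alpha)GK = I-G\bl((1-\alpha)K\br)$; hence every operator below is well defined. Star-shapedness of $T$ gives $\alpha\hg(K)\in T$, so there is $K_\alpha\in S$ with $\hg(K_\alpha)=\alpha\hg(K)$. The involution $\hg(\hg(\cdot))=\mathrm{id}$, valid here because $I-G\hg(K_\alpha)=(I-GK_\alpha)^{-1}$ is invertible, then gives $\hg(\alpha\hg(K))=K_\alpha\in S$. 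The same computation as in the Banach case evaluates the left-hand side,
\[
\hg(\alpha\hg(K)) = \alpha K\bl(I-(1-\alpha)GK\br)^{-1},
\]
so $\alpha K(I-(1-\alpha)GK)^{-1}\in S$, and multiplying by $-1/\alpha$ (again by homogeneity) gives $g(\alpha)\defeq -K\bl(I-(1-\alpha)GK\br)^{-1}\in S$ for every $\alpha\in(0,1]$.

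The main obstacle is the limiting step: I must show $g(\alpha)\to g(0)=\hg(K)$ as $\alpha\to0^+$ in the extended topology, so that closedness of $S$ forces $\hg(K)\in S$. Fixing $T$, I would use the resolvent-type identity
\[
g(\alpha)-g(0) = \alpha\,K\bl(I-(1-\alpha)GK\br)^{-1}GK(I-GK)^{-1}
\]
and bound its $\norm{\cdot}_T$-seminorm. Since all maps are causal, truncation by $P_T$ reduces each factor to a bounded operator on the finite horizon $L_2[0,T]$; here the inert condition is exactly what guarantees that $P_TGKP_T$ is bounded. Invertibility of $I-(1-\alpha)GK$ on $L_{2e}$ with a causal inverse descends to invertibility of $I-(1-\alpha)P_TGKP_T$ on $L_2[0,T]$, and because $\alpha\mapsto I-(1-\alpha)P_TGKP_T$ is norm-continuous and invertible at $\alpha=0$, continuity of inversion on the Banach space $L_2[0,T]$ bounds $\norm{(I-(1-\alpha)GK)^{-1}}_T$ uniformly for $\alpha$ near $0$. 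Submultiplicativity of the seminorms for causal maps, $\norm{AB}_T\le\norm{A}_T\norm{B}_T$, then gives $\norm{g(\alpha)-g(0)}_T\le C_T\,\alpha\to0$. As $T$ was arbitrary, $g$ is continuous at $0$, and closedness yields $\hg(K)\in S$.

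Since $K\in S$ was arbitrary, this shows $T=\hg(S)\subset S$. For the reverse inclusion I would use the involution once more: for any $K\in S$ we have just shown $\hg(K)\in S$, whence $K=\hg(\hg(K))\in\hg(S)=T$, so $S\subset T$ and therefore $T=S$. The discrete case is identical: the $L_{\infty e}$ membership is replaced by the pair of inert conditions $(gk)_{ij}\in\ell_e$ and $r((gk)(0))<1$, and homogeneity again keeps $(1-\alpha)K$ in $S$, so the scaled condition $r\bl(1-\alpha)(gk)(0)\br<1$ is automatic and the truncated-inversion bound goes through on the finite-horizon discrete space.
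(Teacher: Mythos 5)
Your proof is correct and takes essentially the same route as the paper's, which simply transplants the argument of Theorem~\ref{thm:main_banach} to the extended setting (homogeneity plus inertness of $S$ supplying invertibility of $I-(1-\alpha)GK$, followed by continuity of $g$ at $0$ in the $\norm{\cdot}_T$-topology of Section~\ref{sec:prelim}); the paper states this verification as a one-line remark, and you merely spell out the details via the resolvent identity, causal submultiplicativity of the seminorms, and descent of invertibility to finite horizons. One harmless slip: boundedness of $P_T G K P_T$ follows from continuity of $G$ and $K$ (every map in $\LTI$ has finite $\norm{\cdot}_T$ for each $T$), not from inertness, whose actual role is to guarantee that $I-(1-\alpha)GK$ is invertible with causal inverse.
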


\begin{proof}
This proof is similar to that of Theorem~\ref{thm:main_banach}, except
we do not need to worry about the invertibility of $I-GK$, since it is
guaranteed by the inertness property of $S$. Suppose $\hg(S) = T$,
where $T$ is a star-shaped set. Applying $\hg$ to both sides, we
conclude that $\hg(T)=S$. Together with the star-shaped property of
$T$, it follows that $\hg(\alpha \hg(S)) \subset S$ for any $\alpha \in
[0,1]$. As in the proof of Theorem~\ref{thm:main_banach}, we conclude
that $g(\alpha) \in S$ for all $\alpha\in(0,1]$, where $g$ is defined
in~\eqref{def:g}. The rest of the proof follows as in the proof of
Theorem~\ref{thm:main_banach}. The only difference is that we must
verify continuity of $g$ at $0$ using the topology defined in
Section~\ref{sec:prelim}.
\end{proof}

As in Section~\ref{sec:main_banach}, Theorem~\ref{thm:main_extended}
may be specialized to the case where $S$ is a subspace, and combined
with Theorem~\ref{thm:qi_extended} to yield a necessary condition for
convexity of $\hg$.

\begin{cor}\label{cor:main_extended}
Suppose $S \subset \LTI(L_{2e}^{n_y},L_{2e}^{n_u})$ is an inert and
closed subspace and $G \in \LTI(L_{2e}^{n_y},L_{2e}^{n_u})$. Then the
following statements are equivalent
\begin{enumerate}[(i)]
\item $S$ is quadratically invariant under $G$
\item $\hg(S)=S$
\item $\hg(S)$ is convex
\end{enumerate}
This result also holds when $L_{2e}$ is replaced by $\ell_{2e}$.
\end{cor}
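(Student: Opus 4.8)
The plan is to establish the three-way equivalence exactly as in the Banach-space analogue, Corollary~\ref{cor:main_banach}, but now invoking the extended-space versions of the supporting results, namely Theorem~\ref{thm:qi_extended} and Theorem~\ref{thm:main_extended}. The structure I would use is to prove (i)$\iff$(ii) and (ii)$\iff$(iii). Since inertness of $S$ already guarantees that $I-GK$ is invertible for every $K\in S$, there is no need to intersect with the domain $\mg$, which simplifies the bookkeeping relative to the Banach case.

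First I would dispatch (i)$\iff$(ii): this is precisely Theorem~\ref{thm:qi_extended}, which states that an inert closed subspace $S$ is quadratically invariant under $G$ if and only if $\hg(S)=S$. Next, (ii)$\implies$(iii) is immediate, because $S$ is a subspace and every subspace is convex, so $\hg(S)=S$ is trivially convex.

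The only step requiring genuine argument is (iii)$\implies$(ii). Here I would set $T=\hg(S)$ and aim to apply Theorem~\ref{thm:main_extended}, which forces $T=S$ provided $T$ is star-shaped. The key observation is that $\hg(S)$ contains the origin: since $S$ is a subspace we have $0\in S$, and $\hg(0)=-0\,(I-G\cdot 0)^{-1}=0$, so $0\in\hg(S)$. A convex set containing the origin is star-shaped, so the convexity hypothesis on $\hg(S)$ makes $T=\hg(S)$ star-shaped. Because $S$ is inert, closed, and (being a subspace) homogeneous, the hypotheses of Theorem~\ref{thm:main_extended} are satisfied, and we conclude $T=S$, that is $\hg(S)=S$, which is (ii).

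Finally I would note that the $\ell_{2e}$ statement follows verbatim, since both Theorem~\ref{thm:qi_extended} and Theorem~\ref{thm:main_extended} carry the same $\ell_{2e}$ conclusions. I do not anticipate a serious obstacle: all the analytic difficulty, in particular the continuity-and-limit argument carried out in the extended topology, is already absorbed into Theorem~\ref{thm:main_extended}, so the corollary is essentially an assembly of these pieces. The lone subtlety is the verification that convexity of $\hg(S)$ together with membership of the origin yields the star-shapedness needed to invoke Theorem~\ref{thm:main_extended}.
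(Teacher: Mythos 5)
Your proposal is correct and follows essentially the same route as the paper: the paper proves this corollary by pointing back to the proof of Corollary~\ref{cor:main_banach}, which is exactly your assembly --- (i)$\iff$(ii) from the quadratic invariance theorem, (ii)$\implies$(iii) trivially, and (iii)$\implies$(ii) by observing that the convex set $\hg(S)$ contains the origin (hence is star-shaped) and invoking the main theorem with $S$ homogeneous as a subspace. Your explicit verification that $0\in\hg(S)$ via $\hg(0)=0$, and your remark that inertness removes the need to intersect with $\mg$, are both consistent with the paper's intended argument.
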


\begin{proof}
See the proof of Corollary~\ref{cor:main_banach}.
\end{proof}

Corollary~\ref{cor:main_extended} shows that quadratic invariance is
necessary and sufficient for convexity of $\hg$. As in
Section~\ref{sec:main_banach}, this result also applies to the set of
achievable closed-loop maps when we make additional invertibility
assumptions.

\begin{cor}\label{cor:clmap_extended}
Suppose the conditions of Corollary~\ref{cor:main_extended}
hold, and  $P\in\LTI(L_{2e},L_{2e})$.  If $P_{12}$ is
left-invertible and $P_{21}$ is right-invertible, then the following
statements are equivalent
\begin{enumerate}[(i)]
\item $S$ is quadratically invariant under $G$
\item $P_{11}-P_{12}\hg(S)P_{21}$ is convex
\end{enumerate}
This result also holds when $L_{2e}$ is replaced by $\ell_{2e}$.
\end{cor}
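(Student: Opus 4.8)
The plan is to mirror the structure of Corollary~\ref{cor:clmap_banach}, since the statement is the exact analogue for the extended spaces. The key observation is that the equivalence of quadratic invariance and $\hg(S)=S$ has already been established in Corollary~\ref{cor:main_extended}, so the entire task reduces to translating between convexity of $\hg(S)$ and convexity of the affine image $P_{11}-P_{12}\hg(S)P_{21}$. The left-invertibility of $P_{12}$ and right-invertibility of $P_{21}$ are exactly what make this translation reversible.

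For the forward direction {\it(i)$\implies$(ii)}, I would invoke Corollary~\ref{cor:main_extended} to get $\hg(S)=S$, which is convex since $S$ is a subspace; then $P_{11}-P_{12}\hg(S)P_{21}$ is an affine image of a convex set and hence convex. For the converse {\it(ii)$\implies$(i)}, I would introduce a left-inverse $P_{12}^\dagger$ and a right-inverse $P_{21}^\dagger$ and write
\[
\hg(S) = P_{12}^\dagger P_{11} P_{21}^\dagger - P_{12}^\dagger\bl(P_{11}-P_{12}\hg(S)P_{21}\br)P_{21}^\dagger,
\]
exhibiting $\hg(S)$ as an affine image of the convex set $P_{11}-P_{12}\hg(S)P_{21}$, hence convex. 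Then Corollary~\ref{cor:main_extended} (specifically {\it(iii)$\implies$(i)}) gives quadratic invariance. The argument for $\ell_{2e}$ is verbatim the same once the $L_{2e}$ version is settled.

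The one point requiring care — and the main obstacle — is that the composition and inversion operations must be shown to respect the class of maps and the convexity structure in the extended-space setting. Concretely, I would need to confirm that left- and right-inverses of $P_{12}$ and $P_{21}$ exist as operators in the appropriate $\LTI$ class (so that the products $P_{12}^\dagger(\cdot)P_{21}^\dagger$ are well-defined maps on the relevant extended space), and that pre- and post-composition by fixed linear maps preserves convexity of a set of operators. The latter is routine because convex combinations are preserved under linear maps, but one must be attentive that the affine image is taken within the operator topology defined in Section~\ref{sec:prelim} rather than any Banach-space norm. Since Corollary~\ref{cor:main_extended} already absorbs all the delicate inertness and topology considerations, the proof here can legitimately be short: the substance is entirely in the algebraic identity above together with a citation to the established corollary.

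\begin{proof}
The argument is identical to the proof of Corollary~\ref{cor:clmap_banach}, with Corollary~\ref{cor:main_extended} used in place of Corollary~\ref{cor:main_banach}. For {\it(i)$\implies$(ii)}, Corollary~\ref{cor:main_extended} gives $\hg(S)=S$, which is convex, and hence its affine image $P_{11}-P_{12}\hg(S)P_{21}$ is convex. Conversely, suppose {\it(ii)} holds. Let $P_{12}^\dagger$ be a left-inverse of $P_{12}$ and $P_{21}^\dagger$ a right-inverse of $P_{21}$. Then
\begin{align*}
\hg(S) &= P_{12}^\dagger P_{11} P_{21}^\dagger - P_{12}^\dagger\bl(P_{11}-P_{12}\hg(S)P_{21}\br)P_{21}^\dagger,
\end{align*}
which is an affine image of the convex set $P_{11}-P_{12}\hg(S)P_{21}$ and is therefore convex. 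Applying Corollary~\ref{cor:main_extended}, we conclude that $S$ is quadratically invariant under $G$. The same argument applies when $L_{2e}$ is replaced by $\ell_{2e}$.
\end{proof}
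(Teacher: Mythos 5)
Your proof is correct and matches the paper's approach exactly: the paper's proof of Corollary~\ref{cor:clmap_extended} is literally ``See the proof of Corollary~\ref{cor:clmap_banach},'' and your argument reproduces that proof verbatim with Corollary~\ref{cor:main_extended} substituted for Corollary~\ref{cor:main_banach}, including the same affine-image identity using $P_{12}^\dagger$ and $P_{21}^\dagger$.
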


\begin{proof}
See the proof of Corollary~\ref{cor:clmap_banach}.
\end{proof}

\section{Examples}\label{sec:examples}

\subsubsection*{LQG with sparsity} 

Consider the extended space of signals $L_{2e}$, and suppose we have
the LTI map
\[
\bmat{P_{11} & P_{12} \\ P_{21} & G}:
	\bmat{L_{2e}^{n_w}\\L_{2e}^{n_u}} \to
	\bmat{L_{2e}^{n_z}\\L_{2e}^{n_y}}
\]
given by the following minimal state-space realization.
\[
\bmat{P_{11} & P_{12} \\ P_{21} & G} =
	\left[ \begin{array}{c|cc}
	A & B_1 & B_2 \\ \hlinet
	C_1 & 0 & D_{12} \\
	C_2 & D_{21} & 0
	\end{array}\right]
\]
We now make the classical assumptions typically made in $\Htwo$
synthesis. These assumptions ensure that the controller that achieves
a closed-loop map with minimum norm exists, is unique, and is
rational~\cite{zdg}. They are as follows.
\begin{enumerate}[(i)]\itemsep 0.2\baselineskip
\item $(C_2,A,B_2)$ is stabilizable and detectable 

\item 
For all $\omega\in\R$ the matrices $\bmat{ A-j\omega I & B_2 \\ C_1 &
D_{12}}$ and $D_{12}$ have full column rank

\item 
For all $\omega\in\R$ the matrices $\bmat{A-j\omega I & B_1 \\ C_2 &
D_{21}}$ and $D_{21}$ have full row rank

\end{enumerate}
Now suppose we seek a controller
$K \in \LTI(L_{2e}^{n_y},L_{2e}^{n_u})$ where $K$ has some prescribed
sparsity structure. So $K\in S$ for some closed subspace
$S$. Assumptions (ii) and (iii) ensure that $P_{12}$ is
left-invertible and $P_{21}$ is right-invertible,
respectively. Therefore, we may apply
Corollary~\ref{cor:clmap_extended} and conclude that quadratic
invariance is necessary and sufficient for convexity of the set of
achievable closed-loop maps. Thus for this class of problems,
quadratic invariance may be tested to determine convexity. In
particular, if $S$ is defined as the set of transfer functions with
desired sparsity or delays, then quadratic invariance may be
computationally tested using the methods in~\cite{rotkowitz06}.

It is worth noting that under the above assumptions there are only two
possibilities. Either $S$ is quadratically invariant, in which case
the set of closed-loop maps $C$ must be given by $C =
P_{11}-P_{12}SP_{21}$, or $S$ is not quadratically invariant in which
case $C$ is not convex. Additionally, one can conclude structural
properties. For example, if $S$ is a subspace, then for any plant $P$
satisfying these assumptions the only possible form of $C$ when it is
convex is that $C$ is affine; it cannot, for example, be a ball.

\subsubsection*{Non-affine example} 
In the results of Section~\ref{sec:main_banach}, as well as in the
first example of this section, the set of achievable closed loop maps
is affine whenever it is convex.  In this example, we show that
more complicated sets are achievable as well. Consider the Banach
space of real matrices under the standard induced 2-norm. Define the
matrices {\advance\arraycolsep-1pt
\begin{align*}
P_{11} &= \bmat{0\\0}
&
G &= \bmat{
0 & 1  & 0 & 0 \\ -1 & 0 & 1 & 0 \\ 0 & 0 & 0 & 1 \\ 1 & 0 & -1 & 0}
&
K &= \bmat{t&0&0&0\\0&t&0&0\\0&0&s&0\\0&0&0&s}
\end{align*}}%
Here, the subspace of admissible controllers $S$ is the set of all $K$
above parameterized by $(t,s)\in\R^2$. Now consider two possible pairs
of values for $P_{12}$ and $P_{21}$, and their corresponding sets of
achievable closed-loop maps $C = P_{11}-P_{12}\hg(K)P_{21}$.
\begin{enumerate}[(a)]
\item 
$P_{12} = \bmat{0 & 2 & 0 & 0 \\ 0 & 0 & 0 & -2}$ 
and 
$P_{21} = \bmat{0\\1\\0\\-1}$, 
\if\MODE3{}\else{\hfill\break}\fi
which leads to
$C = \bmat{ \frac{2s}{1+s^2+t^2} \\[1.5mm]
\frac{2t}{1+s^2+t^2}}$.

\item $P_{12} = \bmat{1 & 0 & 0 & 0 \\ 0 & 0 & -1 & 0}$
and 
$P_{21} = \bmat{0\\2\\0\\-1}$,
\if\MODE3{}\else{\hfill\break}\fi
which leads to
$C = \bmat{ \frac{(s^2+2)t^2}{1+s^2+t^2} \\[1.5mm]
\frac{s^2(1-t^2)}{1+s^2+t^2}}$.
\end{enumerate}
As we vary $(s,t)\in\R^2$, one may check that $C$ is the unit disc in
Case~(a), and a more complicated nonconvex shape in
Case~(b). Figure~\ref{fig:ex_shapes} shows the sets $C$ in each case.

\begin{figure}[ht]
\centering\begin{minipage}[b]{.45\textwidth}
	\centering
	\begin{tikzpicture}[twopic]
	\def\th{0.05}
	\draw[blfill] (0,0) circle (1);
	\draw[axes] (-1.5,0) -- (1.5,0);
	\draw[axes] (0,-1.4) -- (0,1.4);
	\draw[axes] (0,-1.4) -- (0,1.4);
	\draw[ticks] (1,-\th) -- (1,\th);
	\draw[ticks] (-1,-\th) -- (-1,\th);
	\draw[ticks] (-\th,1) -- (\th,1);
	\draw[ticks] (-\th,-1) -- (\th,-1);
	\node[anchor=south east] at (0,1) {$1$};
	\node[anchor=north east] at (0,-1) {$-1$};
	\node[anchor=north west] at (1,0) {$1$};
	\node[anchor=north east] at (-1,0) {$-1$};
	\end{tikzpicture}
	\subcaption{first case, $C$ is convex.}
\end{minipage}
\begin{minipage}[b]{.45\textwidth}
	\centering
	\begin{tikzpicture}[twopic]
	\def\xoff{.35}
	\def\poff{0.0}
	\def\th{0.05}
	\path[bfill] (0,0) -- (0,1) -- (2+\xoff,-1-\xoff) -- (2+\xoff,-\xoff) -- (2,0) -- (0,0);
	\draw[bline] (0,1) -- (0,0) -- (2,0);
	\draw[bdash,->] (0,1) -- (2+\xoff,-1-\xoff);
	\draw[bdash,->] (2,0) -- (2+\xoff,-\xoff);
	\draw[axes] (-.5-\poff,0) -- (2.5-\poff,0);
	\draw[axes] (0,-1.4) -- (0,1.4);
	\draw[ticks] (1,-\th) -- (1,\th);
	\draw[ticks] (2,-\th) -- (2,\th);
	\draw[ticks] (-\th,1) -- (\th,1);
	\draw[ticks] (-\th,-1) -- (\th,-1);
	\node[anchor=north east] at (1,0) {$1$};
	\node[anchor=south east] at (2,0) {$2$};
	\node[anchor=east] at (0,1) {$1$};
	\node[anchor=east] at (0,-1) {$-1$};
	\end{tikzpicture}
	\subcaption{second case, $C$ is nonconvex.}
\end{minipage} \\
\caption{\label{fig:ex_shapes}The set of achievable closed-loop
 maps described by both cases above. Solid lines are included in the
 set, while dashed lines are not.}
\end{figure}
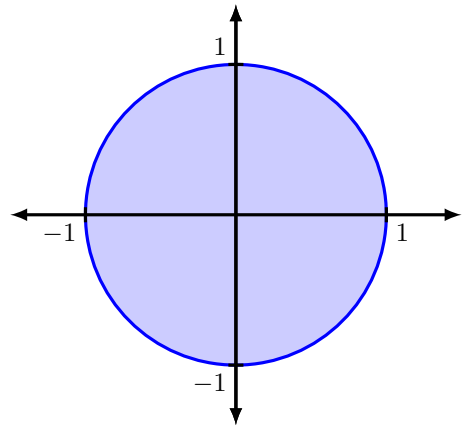
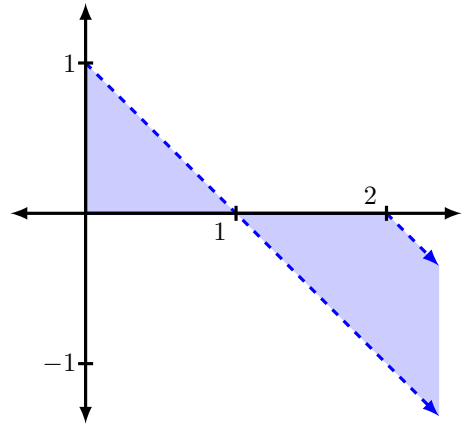

Both cases above have the same $G$ and $S$. It is straightforward to
check that $S$ is not quadratically invariant under $G$. Therefore,
Corollary~\ref{cor:main_banach} implies that $\hg(S)$ is not a convex
set. Unfortunately, we cannot apply Corollary~\ref{cor:clmap_banach}
to deduce anything about the convexity of $C$ because in both cases,
$P_{12}$ is not left-invertible and $P_{21}$ is not
right-invertible. Indeed, we have shown that both a convex set~(a) and
a nonconvex set~(b) are achievable depending on the choice of $P_{12}$
and $P_{21}$.

This example shows that in general, the convexity of $C$ does not
depend on $S$ and $G$ alone, but also on $P_{12}$ and $P_{21}$. This
idea is further discussed in~\cite{lessard2010iqi}, where $P_{12}$ and
$P_{21}$ are used as part of a sufficient condition for convexity that
is more general than quadratic invariance.

\newcommand{\B}{\rule[-1.2ex]{0pt}{0pt}}

\subsubsection*{Affine example} 
We now show an example where the set of closed-loop maps is a convex set, and in particular is affine, even though $S$ is not quadratically invariant under $G$. This example is a variant on an example shown in~\cite{lessard2009reduction}. Suppose
\begin{align*}
  P &= \left[\addtolength{\arraycolsep}{-1pt}\begin{array}{c;{1pt/2pt}ccc} \B a & b_1 & b_2 & b_2\\ \hdashline[1pt/2pt] \T
       c_1 & g_1 & 0 & 0 \\ \T
       c_1 &  g_1 & 0 & 0 \\ \T
        c_2 & g_2 & g_3 & g_3
       \end{array}\right] &
  S &= \left.\left\{ \addtolength{\arraycolsep}{-1pt}\bmat{\T k_1 & 0 & 0\\\T 0 & k_2 & 0\\\T 0 & 0 & k_3} \,\right\vert\, k_i \in \R \right\}
\end{align*}
Here $a$, $b_i$, $c_i$ and $g_i$ are real numbers for simplicity, although it is straightforward to construct similar examples over the rational transfer matrices. The matrix $P$ is partitioned into its four blocks as shown with $P_{22}=G$; the dashed lines do not denote a state-space representation. The information constraint $S$ is the subspace of controllers with a diagonal structure. Note that $KGK$ is not diagonal for all  diagonal $K$, so $S$ is not quadratically invariant under $G$. We will show that nonetheless that the set of achievable closed-loop maps $C$ is affine. To see why, define
\[
\t P = \left[\begin{array}{c;{1pt/2pt}cc} \B a & b_1 & b_2\\ 
    \hdashline[1pt/2pt] \T
    c_1 & g_1 & 0 \\ \T
    \T\B c_2 & g_2 & g_3
  \end{array}\right] \quad
\t S = \left.\left\{ 
    \bmat{k_1 & 0\\k_2 & k_3} \,\right\vert\, k_i \in \R \right\}
\]
Here $\t S$ is clearly quadratically invariant under $\t G$. Therefore the set of achievable closed-loop maps $\t C$ is affine. It is straightforward to check that $C =\t C$. In fact, the parameterizations are exactly equal provided the $k_i$ are the same. So $C$ is affine despite $S$ not being quadratically invariant under $G$. Such transformations $(P,S)\mapsto(\t P,\t S)$ are explored in~\cite{lessard2009reduction}.

\section{Conclusion}\label{sec:conclusion}

It was previously known that when the set of decentralized controllers
is a subspace~$S$, quadratic invariance is a necessary and sufficient
condition under which $\hg(S) = S$. In this paper, we showed that
quadratic invariance is in fact necessary and sufficient for $\hg(S)$
to be convex. Furthermore, we showed that under certain invertibility
conditions, quadratic invariance is also necessary and sufficient for
the convexity of~$C$, the set of achievable closed-loop maps.

This work therefore strengthens the utility of quadratic invariance as
an indicator for tractability of decentralized control synthesis
problems. However, there remains a nontrivial case, that when $S$ is
not quadratically invariant and the aforementioned invertibility
conditions do not hold. In that case one cannot draw any conclusions
about the convexity of~$C$. This was illustrated in
Section~\ref{sec:examples}, where we gave an example of a problem that
is not quadratically invariant, but~$C$ can be either convex or
nonconvex, depending on the choice of the system parameters.


\bibliographystyle{abbrv}
\bibliography{qico}

\end{document}